\DeclareMathOperator{\rt}{rt}
\DeclareSymbolFont{rsfscript}{OMS}{rsfs}{m}{n}
\DeclareSymbolFontAlphabet{\mathrsfs}{rsfscript}
\begin{document}

\title{On the Number of\\Synchronizing Colorings of Digraphs}

\author{Vladimir V. Gusev\inst{1}\thanks{Supported by the Communaut{\'e} fran\c{c}aise de Belgique
- Actions de Recherche Concert{\'e}es, by the Belgian Programme on
Interuniversity Attraction Poles, by the Russian foundation for basic research(grant 13-01-00852), Ministry of Education and Science of the Russian Federation (project no.\ 1.1999.2014/K), Presidential Program for Young Researchers(grant MK-3160.2014.1) and the Competitiveness Program of Ural Federal University.}
\and
Marek Szyku{\l}a\inst{2}\thanks{Supported in part by Polish NCN grant DEC-2013/09/N/ST6/01194.}}

\authorrunning{V. V. Gusev, M. Szyku{\l}a}

\tocauthor{V. V. Gusev,  M. Szyku{\l}a}

\institute{ICTEAM Institute, Universit{\'e} catholique de Louvain, Belgium,\\
Ural Federal University, Russia\\
\email{vl.gusev@gmail.com}
\and Institute of Computer Science,\\
University of Wroc{\l}aw, Wroc{\l}aw, Poland\\
\email{msz@cs.uni.wroc.pl}}

\maketitle

\begin{abstract}
We deal with $k$-out-regular directed multigraphs with loops (called simply \emph{digraphs}).
The edges of such a digraph can be colored by elements of some fixed $k$-element set in such a way that outgoing edges of every vertex have different colors. Such a coloring corresponds naturally to an automaton.
The road coloring theorem states that every primitive digraph has a synchronizing coloring.

In the present paper we study how many synchronizing colorings can exist for a digraph with $n$ vertices.
We performed an extensive experimental investigation of digraphs with small number of vertices. This was done by using our dedicated algorithm exhaustively enumerating all small digraphs.
We also present a series of digraphs whose fraction of synchronizing colorings is equal to $1-1/k^d$, for every $d \ge 1$ and the number of vertices large enough.

On the basis of our results we state several conjectures and open problems. In particular, we conjecture that $1-1/k$ is the smallest possible fraction of synchronizing colorings, except for a single exceptional example on 6 vertices for $k=2$.
\end{abstract}


\section{Introduction}

Throughout the paper we deal with directed multigraphs $\mathcal{G} = \langle V,E\rangle$ of a fixed out-degree $k$ with loops, where $V$ is a finite set of $n$ \emph{vertices} and $E$ is a finite multiset of \emph{edges}. For each $v \in V$ there are exactly $k$ outgoing edges $(v,w) \in E$. These are simply called \emph{digraphs} throughout the paper.
For every vertex $v \in V$ of a digraph $\mathcal{G}$, the $k$ outgoing edges can be colored differently by one of the $k$ colors from a finite set $\Sigma$, giving raise to a \emph{deterministic finite (semi)automaton} $\mathrsfs{A} = \langle V,\Sigma,\delta\rangle$ with the set of states $V$, the alphabet $\Sigma$, and the transition function $\delta$, where $\delta(v,a) = w$ whenever an edge $(v,w) \in E$ was colored by $a$. Every such automaton $\mathrsfs{A}$ is called a \emph{coloring} of the digraph $\mathcal{G}$.
Thus we identify automata with colorings of their underlying digraphs.
We extend the transition function $\delta\colon V \times \Sigma \to V$ to $\delta\colon 2^V \times \Sigma^* \to 2^V$ on subsets and words in the natural way.
For $\delta(S,w)$, where $S \subseteq V$ and $w \in \Sigma^*$, we also write shortly $S w$.

Automaton $\mathrsfs{A}$ is called \emph{synchronizing}
if there exist a word $w$ and a state $p$ such that for every state $q \in Q$ we have $\delta(q,w)=p$. Such a word $w$ is called \emph{reset} (or \emph{synchronizing}) word for $\mathrsfs{A}$.
The length of the shortest synchronizing word of $\mathrsfs{A}$ is called \emph{reset threshold} and is denoted by $\rt(\mathrsfs{A})$.
Recent surveys of the theory of synchronizing automata may be found in~\cite{KariVolkov2013Handbook,Volkov2008Survey}.

Call a digraph \emph{primitive} (or \emph{aperiodic}) if it is strongly connected and the $\gcd$ of all its cycles is equal to 1.
It is easy to show that an underlying digraph of a synchronizing automaton is primitive. In 1977 Adler, Goodwyn and Weiss conjectured~\cite{AGW1977} that every primitive digraph has a synchronizing coloring. This conjecture became widely known as the \emph{road coloring} problem. It was arguably one of the most important conjectures in automata theory until it was finally proved by Trahtman in 2007~\cite{Tr2009RoadColoring}. One of the goals of the present paper is to find the right quantitative formulation of the road coloring theorem. 

Another part of our motivation comes from the algorithmic issues related to the road coloring problem. How to find a synchronizing coloring of a given digraph?
A non-trivial algorithm working in time $O(kn^2)$ is known for this task~\cite{BealPerrin2014AQuadraticAlgorithmForRoadColoring}. On the other hand, M.-P. B{\'e}al suggested during her talk at CANT 2012 that a random sampling of colorings in a search for a synchronizing one may lead to a simple and practically effective algorithm for the problem. Since one can check whether a coloring is synchronizing in $O(kn^2)$ time, it remains to show that a random coloring is synchronizing with high probability. In our research we were partially motivated by this observation.

There are other computational problems related to the synchronizing colorings of digraphs, such as deciding existence of a synchronizing coloring for a fixed reset word~\cite{VorelRoman2015ComplexityOfRoadColoringWithPrescribedResetWords}, or for a fixed reset threshold~\cite{Roman2012P-NPThresholdForSynchronizingRoadColoring}.
Also, several open problems concerning synchronizing automata and the road coloring problem have been stated by M.V. Volkov~\cite{Volkov2008OpenProblemsOnSynchronizingAutomata}.



For a given $k$-out-regular digraph $\mathcal{G}$ with $n$ vertices,
the $\emph{synchronizing ratio}$ is the number of synchronizing colorings to the number $(k!)^n$ of all possible colorings.
Note that we distinguish edges of $\mathcal{G}$, i.e. two colorings are the same if all edges have the same color. Therefore, there is always exactly $(k!)^n$ different colorings of $\mathcal{G}$.
A digraph $\mathcal{G}$ is $\emph{totally synchronizing}$ if its synchronizing ratio of $\mathcal{G}$ is equal to 1.

In this paper we perform an experimental and theoretical study on the synchronizing ratio of digraphs. Our main contributions are as follows:
\begin{enumerate}
\item We developed an efficient algorithm for enumerating and checking synchronizing ratios of nonisomorphic digraphs.
\item Using the algorithm, we performed extensive experiments revealing various phenomena concerning the synchronizing ratio.
These provide evidence to state several conjectures and form a basis for further investigation.
\item We found out that for small $n$ and $k$ there are no primitive strongly connected digraphs with synchronizing ratio less than $1-1/k$, except for a single particular example for $n=6$ and $k=2$.
\item We constructed digraphs with synchronizing ratio $1-1/k^d$, for every $d \ge 1$ and $n\ge 3d$. This shows that there are many examples with different synchronizing ratio in the range $[1-1/k,1]$.
\end{enumerate}

\section{General Statements}

A strongly connected component $S$ of a digraph $\mathcal{G} = \langle V,E\rangle$ is called a \emph{sink component} if there are no edges going from $S$ to $V \setminus S$.
It is \emph{reachable} if for any vertex $v \in V$ there is a directed path from $v$ to a vertex in $S$.

\begin{proposition}\label{pro:nonsc}
If a digraph $\mathcal{G}$ has a synchronizing coloring then it has a unique reachable sink component $S$.
Furthermore, the synchronizing ratio of $\mathcal{G}$ is equal to the synchronizing ratio of the digraph induced by $S$.
\end{proposition}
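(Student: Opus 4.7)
The plan is to handle the structural part first (existence and uniqueness of a reachable sink component), then use it to set up a bijective criterion that equates the two synchronizing ratios via an elementary count.

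For the first part, I would fix an arbitrary synchronizing coloring $\mathrsfs{A}$ of $\mathcal{G}$ with reset word $w$ mapping every state to a common state $p$. Two observations then drive everything. First, $\delta(p,w)=p$, and more generally for any $q$ reachable from $p$ in $\mathrsfs{A}$ we have $\delta(q,w)=p$, which exhibits a directed path from $q$ back to $p$; hence every vertex reachable from $p$ lies in the strongly connected component $S$ of $p$, and so $S$ is a sink component of $\mathcal{G}$. Second, $\delta(v,w)=p \in S$ for every vertex $v$ supplies a path from $v$ into $S$, proving that $S$ is reachable. For uniqueness, any further reachable sink component $S'$ would in particular be reachable from some vertex of $S$, but a directed path leaving $S$ is impossible, forcing $S'=S$.

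For the second part, the main step is the equivalence: a coloring of $\mathcal{G}$ is synchronizing if and only if its restriction to the subgraph induced by $S$ is synchronizing. The forward direction is exactly the argument above applied to the given coloring. The reverse direction is the most delicate step and relies on two ingredients: the invariance of $S$ under any coloring (since $S$ is a sink of $\mathcal{G}$) and the reachability of $S$ from every vertex of $\mathcal{G}$. Given any coloring whose restriction to $S$ is synchronized by some word $w'$, I would build a word $u$ with $\delta(V,u) \subseteq S$ inductively: starting from the empty word, repeatedly pick a vertex $v$ with $\delta(v,u) \notin S$ and append to $u$ the color sequence of any directed path from $\delta(v,u)$ into $S$. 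Vertices already driven into $S$ remain there, so the process terminates, and then $uw'$ is a reset word for the entire coloring.

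The equality of synchronizing ratios then reduces to counting. A coloring of $\mathcal{G}$ is determined independently by its restriction to the subgraph induced by $S$, chosen in $(k!)^{|S|}$ ways, and by the coloring of edges leaving $V \setminus S$, chosen in $(k!)^{|V \setminus S|}$ ways. By the equivalence just established, the synchronizing colorings of $\mathcal{G}$ are exactly the pairs (synchronizing coloring of $S$, arbitrary coloring of the rest); dividing by $(k!)^{|V|}$ turns this into the claimed equality of ratios. The principal obstacle I anticipate is the funnel construction in the reverse direction of the equivalence, since it must upgrade the purely combinatorial reachability of $S$ in $\mathcal{G}$ into a single coloring-specific word that drives every vertex into $S$.
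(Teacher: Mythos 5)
Your proof is correct and follows essentially the same route as the paper: establish the coloring-independent equivalence between a coloring of $\mathcal{G}$ being synchronizing and its restriction to $S$ being synchronizing, then equate the ratios by the product/grouping count over the $(k!)^{|V\setminus S|}$ ways to color the edges outside $S$. The paper dismisses the structural part as folklore and the equivalence as "not hard to see," whereas you spell out those details (the SCC argument for the synchronizing state and the funnel word $u$ driving all states into $S$), but the skeleton is identical.
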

\begin{proof}
The proof of the first statement belongs to folklore.
It is not hard to see that an arbitrary coloring $\mathrsfs{A}$ of digraph $\mathcal{G}$ is synchronizing if and only if the subautomaton $\mathrsfs{A}'$ induced by the sink component $S$ is synchronizing.
Therefore, the set of all colorings of $\mathcal{G}$ can be divided into groups of equal size, each group containing the colorings with the same induced subcoloring of $S$.
Since colorings from each group are altogether synchronizing or non-synchronizing, we obtain that the
synchronizing ratio of $\mathcal{G}$ is equal to the synchronizing ratio of the digraph induced by $S$.
\qed
\end{proof}

Since a one-vertex digraph is totally synchronizing we have the following corollary:
\begin{corollary}
A digraph with a sink state is either totally synchronizing or none of its colorings is synchronizing.
\end{corollary}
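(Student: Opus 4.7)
The plan is to apply Proposition~\ref{pro:nonsc} directly, using the observation that a sink state $s$---a vertex whose $k$ outgoing edges are all self-loops---constitutes by itself a sink strongly connected component $\{s\}$, since no edge leaves $\{s\}$.

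I would then split on whether $\mathcal{G}$ admits at least one synchronizing coloring. If it admits none, the conclusion ``none of its colorings is synchronizing'' holds trivially. Otherwise, Proposition~\ref{pro:nonsc} guarantees a unique reachable sink component $S$ of $\mathcal{G}$; because $\{s\}$ is already a sink component and distinct sink components of a digraph are disjoint, uniqueness forces $S = \{s\}$.

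Finally, the subdigraph induced by $\{s\}$ is a one-vertex digraph, and \emph{every} coloring of it is synchronizing (the empty word, or any single letter, resets the automaton). Hence its synchronizing ratio equals $1$, and the second part of Proposition~\ref{pro:nonsc} transfers this to $\mathcal{G}$, showing that $\mathcal{G}$ is totally synchronizing.

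The statement is essentially a direct corollary, so there is no genuine obstacle; the only point to check carefully is that a sink state always forms a sink component on its own, which is immediate from the definitions, after which the dichotomy ``ratio $0$ or ratio $1$'' follows from Proposition~\ref{pro:nonsc}.
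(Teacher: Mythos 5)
Your proposal is correct and follows essentially the same route as the paper, which derives the corollary from Proposition~\ref{pro:nonsc} via the one-line observation that the sink component here is the single sink state and a one-vertex digraph is totally synchronizing. Your write-up merely makes explicit the (easy) identification of the unique reachable sink component with $\{s\}$, which the paper leaves implicit.
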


Due to Proposition~\ref{pro:nonsc} the study of synchronizing ratios and totally synchronizing digraphs can be reduced to the case of strongly connected digraphs.

Surprisingly, the underlying digraphs of several automata presented in the literature appear to be totally synchronizing. One important example of such a digraph is well known to the community, see for example~\cite{Volkov2008OpenProblemsOnSynchronizingAutomata}. Recall that the \v{C}ern\'{y} automaton $\mathrsfs{C}_n$ (\cite{Cerny1964}) can be defined as $\langle \{0,\ldots,n-1\},\{a,b\},\delta \rangle$, where $\delta(i,a)=i+1$ for $i<n-1$, $\delta(n-1,a)=0$, $\delta(n-1,b)=0$, and $\delta(i,b)=i$ for $i<n-1$. The proof of the following folklore result has not yet appeared in the literature.

\begin{proposition}
The underlying digraph of $\mathrsfs{C}_n$ is totally synchronizing.
\end{proposition}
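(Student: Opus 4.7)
The strategy is to prove that for any coloring of the underlying digraph of $\mathrsfs{C}_n$, every pair $\{p, q\}$ of distinct states can be merged by some word $w$, meaning $\delta(p, w) = \delta(q, w)$. This suffices for synchronization: whenever the current image of $V$ under the word built so far has size at least $2$, we pick any two of its states, merge them by a further word, and shrink the image by at least one; iterating gives a singleton.

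The key feature of $\mathrsfs{C}_n$ is that both outgoing edges from vertex $n-1$ go to $0$, so $\delta(n-1, a) = \delta(n-1, b) = 0$ in every coloring. Write $L_i$ for the letter coloring the loop at $i$ and $M_i$ for the letter coloring the forward edge $i \to i+1$ (for $i \le n-2$). Then $\delta(0, L_0) = \delta(n-1, L_0) = 0$, so $L_0$ always merges $\{0, n-1\}$. Consecutive pairs $\{i, i+1\}$ are handled by induction on $i$: if $i \le n-3$ and $M_i \ne M_{i+1}$, then $M_i$ merges the pair to $\{i+1\}$; if $M_i = M_{i+1}$, applying $M_i$ yields $\{i+1, i+2\}$ and we iterate; and $\{n-2, n-1\}$ is sent by $M_{n-2}$ to $\{0, n-1\}$, already handled.

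For a non-consecutive pair $\{p, q\}$ with $p < q$, the plan is to evolve the pair step by step. At each step we either advance both elements in lockstep, when $M_{p'} = M_{q'}$ at their current positions $p', q'$, or advance one of them independently, when $M_{p'} \ne M_{q'}$. Advancing until one element reaches $n-1$ and then applying one further letter sends that element to $0$, producing a pair of the form $\{0, r\}$; one then iterates on this new pair. The argument concludes by showing that this iteration must eventually reach $\{0, n-1\}$ or a consecutive pair, both of which are mergeable by the previous step.

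The main obstacle is proving termination in the ``lockstep'' regime, where at every visited pair $\{p', q'\}$ one has $M_{p'} = M_{q'}$ and the two elements cannot be advanced independently. Here the pair is forced to move in lockstep through the cyclic structure, passing repeatedly through vertex $n-1$ and undergoing the collapse $n-1 \to 0$. A careful case analysis on the pattern of $(M_i)$---tracking how the gap between the pair's elements evolves across successive passages through $n-1$---shows that the orbit cannot cycle indefinitely among non-mergeable pairs and must eventually land on $\{0, n-1\}$ or on a consecutive pair, completing the proof.
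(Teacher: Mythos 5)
Your reduction to pair-merging, and your treatment of the pair $\{0,n-1\}$ and of consecutive pairs $\{i,i+1\}$, are correct and follow the same route as the paper. The problem is the non-consecutive case, which is where the whole proof lives: you correctly single out termination of the ``lockstep'' regime as ``the main obstacle'' and then dispose of it by asserting that ``a careful case analysis \dots shows that the orbit cannot cycle indefinitely.'' That analysis is never given, so the argument is incomplete exactly at its crux. Moreover, the sketch as written does not contain the mechanism that makes termination true: in the lockstep regime the cyclic gap between the two elements is invariant, and when an element reaches $n-1$ you only say that ``one further letter'' is applied, without saying which. If that letter happens to be the one coloring the forward edge at the other element, both elements move and the gap is preserved again; nothing in your description prevents the orbit from circulating forever with a constant gap.

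The missing idea --- which is not a case analysis but a one-line observation, and is the heart of the paper's proof --- is the following. Orient the pair as $(i,j)$ with $d(i,j)\le d(j,i)$, where $d$ is the shortest-path (cyclic) distance, and use $d(i,j)$ as a potential. Advancing the trailing element $i$ (in lockstep when forced, and sending $j$ through $n-1\to 0$ when necessary) never increases $d(i,j)$, and within at most $n$ steps $i$ arrives at $n-1$ while $j\neq n-1$ still carries a loop. Since \emph{both} letters send $n-1$ to $0$, you are free at that moment to apply the letter coloring the loop at $j$: this holds $j$ fixed and sends $i$ to $0$, strictly decreasing $d(i,j)$. Iterating drives the distance to $0$, merging the pair. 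Without this explicit letter choice at the instant the trailing element sits on $n-1$ (or some equivalent strictly decreasing potential), your termination claim is unsupported.
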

\begin{proof}
Let $\mathrsfs{C}'_n$ be an arbitrary coloring of the underlying digraph of $\mathrsfs{C}_n$.
It is well known that an automaton is synchronizing if and only if every pair of states $i,j$ is synchronizing, i.e. there is a word $w$ such that $i w = j w$ (see~\cite{Cerny1964}, or \cite[Proposition~2.1]{Volkov2008Survey}).

We will show that any pair of states $(i, j)$ of $\mathrsfs{C}'_n$ satisfy this condition.
Let $d(i,j)$ be the length of the shortest path from $i$ to $j$.
We will proceed by induction on $d(i,j)$.
Consider a pair $(i,j)$; without loss of generality we may assume that $d(i,j) \leq d(j,i)$.

If $j = n-1$ then let $y$ be the letter on the edge from $i$ to $i+1$.
We apply $y$ so $(i y, j y) = (i+1, 0)$, and $d(i, j) = d(i+1, 0)$.

Consider the case $j \neq n-1$.
Let $x$ be the letter on the loop on the state $j$.
If $i < n-1$ then let $y$ be the letter on the edge from $i$ to $i+1$; otherwise let $y = x$.
If $x=y$ then $d(i x,j x) = d(i y, j) < d(i,j)$.
Otherwise we apply the letter $y$, and in the same manner consider the pair $(i y, j y) = (i+1, j+1)$.
Note that $d(i,j) = d(i+1 , j+1)$.
Following in this way, after at most $n-1-i$ steps, we will reach a pair $(n-1, k)$.
For $(n-1, k)$ we choose $y = x$, we obtain $d((n-1) x, k x) = d(0, k) < d(n-1, k)$.
\qed
\end{proof}

Underlying digraphs of many other automata that appeared in literature are also totally synchronizing.
In a similar fashion one can show that the underlying digraphs of the series of slowly synchronizing automata (see~\cite{AGV2013,KS2014SynchronizingAutomataWithLargeResetLengths}) are totally synchronizing. Also, almost all presented examples of automata with two cycle lengths have this property~\cite{GusevPribavkina2014ResetThresholdsOfAutomataWithTwoCycleLengths}.

For the sake of completeness we mention the following notions from related topics.
A word $w$ is called \emph{totally synchronizing} if $w$ is a reset word for any coloring of totally synchronizing digraph $\mathcal{G}$. See~\cite{Cardoso2014PhD} for an analysis of totally synchronizing digraphs and words in some special classes of digraphs. A word over an alphabet $\Sigma$ is called \emph{$n$-synchronizing} if it is a reset word for all synchronizing automata with $n + 1$ states over the alphabet $\Sigma$. See~\cite{Cherubini2007SynchronizingAndCollapsingWords} for the introduction to the topic.

\section{Experimental Investigation of Digraphs}

We performed a series of experiments to reveal some properties of the synchronizing ratio of digraphs. These include both exhaustive enumeration of small digraphs and larger random digraphs.
We are interested mostly in primitive strongly connected digraphs (cf.~Proposition~\ref{pro:nonsc}).
In the case of exhaustive enumeration we checked the synchronizing ratio of all nonisomorphic $k$-out-regular digraphs with a given $n$ vertices.

\subsection{Algorithms}

To check as many cases as possible and obtain a large data set, we needed to design and implement our algorithms carefully.
This is especially important during the exhaustive search, since the number of digraphs grows very fast with $n$ and $k$.
Here we briefly describe our algorithms, skipping numerous technical improvements and tricks in the implementation.
Some of our ideas are based on~\cite{KS2013GeneratingSmallAutomata}, where the \v{C}ern\'{y} conjecture was verified by an exhaustive enumeration for all binary automata up to $n\le 11$ states.

To determine the synchronizing ratio of a digraph, we can just enumerate all its colorings and count the synchronizing ones.
Checking whether a coloring (automaton) is synchronizing can be easily done in $O(kn^2)$ time \cite{Cerny1964,Ep1990}.
Note that in many cases, some colorings give rise to the same particular automaton (e.g.~if there are are two or more parallel edges $(v,w)$ then we can permute its colors obtaining the same automaton).
Also, every coloring has $k!$ equivalent colorings obtained only by permuting the colors. Using these facts we could greatly reduce the total number of really checked colorings for synchronization.

Checking whether a digraph is strongly connected and the $\gcd$ of its cycles is 1 can be effectively done in $O(kn)$ time basing on the algorithms from~\cite{Tarjan1972} and \cite{JS1996GraphTheoreticAnalysisOfMarkovChains}, respectively.

Now, computing the synchronizing ratios of a set of random digraphs follows easily, and we can proceed this in parallel on a grid.
However, in an exhaustive enumeration, the number of digraphs grows very fast in terms of $n$ and $k$ (see~Table~\ref{tab:exhaustive_results_2}), and the main problem was to deal with it.

\begin{algorithm}
\caption{Exhaustive checking of digraphs.}
\label{alg:exhaustive}
\begin{algorithmic}[1]
\Require{$n$ -- the number of vertices (states)}
\Require{$k$ -- the out-degree (size of the alphabet)}
\State $G_n \gets$ the set of all simple graphs with $n$ vertices.
\ForAll{simple graphs $\mathcal{G} \in G_n$}\Comment{In parallel}
  \State $\mathit{CanSet} \gets$ \Call{EmptySet}{}
  \ForAll{digraphs $\mathcal{D}_{n,k}$ with underlying graph $\mathcal{G}$}\Comment{ Orient and multiply the edges of $\mathcal{G}$, and add loops}
    \If{$\mathcal{D}_{n,k}$ is primitive}
      \State $\mathcal{R}_{n,k} \gets$ the canonical representation of $\mathcal{D}_{n,k}$.
      \If{$\mathcal{R}_{n,k} \not\in \mathit{CanSet}$}
        \State $\mathit{CanSet}$.\Call{insert}{$\mathcal{R}_{n,k}$}
        \State Count synchronizing colorings of $\mathcal{R}_{n,k}$.
      \EndIf
    \EndIf
  \EndFor
\EndFor
\end{algorithmic}
\end{algorithm}

Our algorithm for exhaustive checking of digraphs is summarized in~Algorithm~\ref{alg:exhaustive}.
First, in line~1, we generate all nonisomorphic simple graphs with $n$ vertices. A \emph{simple graph} is a graph with undirected edges joining two distinct vertices. This can be done effectively by the algorithm from \cite{McKayPiperno2014PracticalGraphIsomorphismII}, implemented in package \texttt{nauty}.
Now, we can process each such a simple graph in parallel. In line~4, for every simple graph $\mathcal{G}$ we orient and multiply its edges so that there are at most $k$ outgoing edges for each vertex. Then we interpret the missing edges as loops.
Clearly, an isomorphic copy of every digraph can be obtained in this way from its underlying simple graph, and the digraphs obtained from two nonisomorphic simple graphs are also nonisomorphic. We can, however, obtain isomorphic digraphs from the same simple graph.
In line~5 we skip non-strongly connected and non-primitive digraphs.
In line~6 we compute the \emph{canonical representation} of a generated digraph $\mathcal{D}_{n,k}$; this is the lexicographically minimal representation among all digraphs isomorphic to $\mathcal{D}_{n,k}$ (cf.~\cite{McKayPiperno2014PracticalGraphIsomorphismII}).
To skip isomorphic copies obtained from the same simple graph, in line~3 we introduce the set $\mathit{CanSet}$ of canonical representations of generated digraphs. Then in line~7, we check if an isomorphic copy of the digraph $\mathcal{D}_{n,k}$ was already generated; if not, in line~8 we insert it to the set. The set $\mathit{CanSet}$ can be effectively implemented as a radix trie, allowing to perform both membership test and insertion in linear time, and providing some compression (which is also important in view of the number of generated digraphs).
Finally, we can count synchronizing colorings of the generated digraph (line~9).

\subsection{Experimental Results from Exhaustive Enumeration}

\begin{table}
\caption{The minimum, average, and standard deviation of the number of synchronizing colorings of all strongly connected aperiodic $k$-out-regular digraphs with $n$ vertices}
\label{tab:exhaustive_results_1}
\begin{center}
\begin{tabular}{|c|c|r|r|r|r|r|}
  \hline
  $k$&$n$&\multicolumn{1}{c|}{\ Min\ }&\multicolumn{1}{c|}{\ Min ratio\ }&\multicolumn{1}{c|}{\ Avg\ }& \multicolumn{1}{c|}{\ Avg ratio\ } &\multicolumn{1}{c|}{\ Std dev\ } \\ \hline\hline
  2 & 2  &  2         & 0.5       & 3               & 0.750           & 1.000  \\ \hline  
  2 & 3  &  4         & 0.5       & 6.833           & 0.854           & 1.280  \\ \hline
  2 & 4  &  8         & 0.5       & 14.640          & 0.915           & 2.243  \\ \hline  
  2 & 5  &  16        & 0.5       & 30.987          & 0.968           & 2.146  \\ \hline  
  2 & 6  &  30        & 0.469     & 63.139          & 0.986           & 2.381  \\ \hline
  2 & 7  &  64        & 0.5       & 127.365         & 0.995           & 2.033  \\ \hline  
  2 & 8  &  128       & 0.5       & 255.483         & 0.998           & 1.866  \\ \hline  
  2 & 9  &  256       & 0.5       & 511.563         & 0.999           & 1.617  \\ \hline  
  2 & 10 &  512       & 0.5       & 1,023.607       & $\approx 1.000$ & 1.468  \\ \hline
  3 & 2  &   24       & 0.667     & 31.2            & 0.867           & 5.879  \\ \hline 
  3 & 3  &  144       & 0.667     & 208.800         & 0.967           & 14.163  \\ \hline 
  3 & 4  &  864       & 0.667     & 1,284.987       & 0.991           & 36.346  \\ \hline 
  3 & 5  & 5,184      & 0.667     & 7,765.775       & 0.999           & 50.091  \\ \hline 
  3 & 6  & 31,104     & 0.667     & 46,643.953      & $\approx 1.000$ & 78.679  \\ \hline 
  3 & 7  & 186,624    & 0.667     & 279,921.191     & $\approx 1.000$ & 108.167  \\ \hline 
  4 & 2  & 432        & 0.75      & 533.333         & 0.926           & 61.738  \\ \hline  
  4 & 3  & 10,368     & 0.75      & 13,704.874      & 0.991           & 367.767  \\ \hline
  4 & 4  & 248,832    & 0.75      & 331,421.072     & 0.999           & 2,233.171  \\ \hline
  4 & 5  & 5,971,968  & 0.75      & 7,961,941.49    & $\approx 1.000$ & 7,104.373  \\ \hline
  5 & 2  & 11,520     & 0.75      & 13,782.857      & 0.957           & 1,048.941  \\ \hline  
  5 & 3  & 1,382,400  & 0.75      & 1,723,468.312   & 0.997           & 720,951.433  \\ \hline
  5 & 4  & 165,888,000& 0.75      & 207,324,196.845 & $\approx 1.000$ & 412,162.118  \\ \hline
\end{tabular}
\end{center}
\end{table}

The algorithms in C++ and compiled with GCC 4.8.1.
The computations were performed in parallel on a small grid consisted of computers with 8 processors Quad-Core AMD Opteron(tm) 8350 (2 GHz) and 64GB of RAM.

We were able to check all 2-out-regular digraphs with up to 10 vertices, 3-out-regular up to 7 states, 4-out-regular up to 5 states, and 5-out-regular up to 4 states.
In the case of $k=2$-out-regular digraphs with $n=10$ states, the total processor time was more than 60 days (about 1 day of parallelized computation).
The case of $k=3$ and $n=7$ took even more, about 72 days; the total number of colorings was $\sim 7 \times 10^{14}$, but, thanks to optimization, we required to check only $\sim 10^{13}$ automata.

The results concerning synchronizing ratios are summarized in Table~\ref{tab:exhaustive_results_1}.
In Table~\ref{tab:exhaustive_results_2} we present the exact number of strongly connected aperiodic digraphs, and totally synchronizing digraphs.
We observe that the fraction of totally synchronizing digraphs within the class of strongly connected aperiodic digraphs is growing.

\begin{table}
\caption{The number of nonisomorphic strongly connected aperiodic digraphs, the number of totally synchronizing digraphs, and their fraction}
\label{tab:exhaustive_results_2}
\begin{center}
\begin{tabular}{|c|c|r|r|r|}
  \hline                       
  $k$ & $n$ &\multicolumn{1}{c|}{\ S.c. aperiodic\ }&\multicolumn{1}{c|}{\ Totally synchronizing\ }&\multicolumn{1}{c|}{\ Fraction\ }\\ \hline\hline
  2 & 2 & 2 & 1 & 0.500 \\ \hline  
  2 & 3 & 12 & 6 & 0.500 \\ \hline  
  2 & 4 & 100 & 66 & 0.660 \\ \hline  
  2 & 5 & 1220 & 890 & 0.729 \\ \hline  
  2 & 6 & 19,064 & 14,973 & 0.785\\ \hline  
  2 & 7 & 361,157 & 296,303 & 0.82\\ \hline  
  2 & 8 & 8,001,589 & 6,754,895 & 0.844\\ \hline  
  2 & 9 & 202,635,930 & 174,246,295 & 0.860\\ \hline  
  2 & 10 & 5,765,318,112 & 5,026,305,042 & 0.872\\ \hline  
  3 & 2 & 5 & 3 & 0.600 \\ \hline     
  3 & 3 & 85 & 63 & 0.741 \\ \hline  
  3 & 4 & 3,148 & 2,672 & 0.849 \\ \hline  
  3 & 5 & 199,489 & 182,326 & 0.914 \\ \hline  
  3 & 6 & 19,059,581 & 18,006,297 & 0.945 \\ \hline  
  3 & 7 & 2,537,475,117 & 2,443,850,969 & 0.963 \\ \hline    
  4 & 2 & 9 & 6 & 0.666 \\ \hline
  4 & 3 & 357 & 302 & 0.846 \\ \hline
  4 & 4 & 39,680 & 36,762 & 0.926 \\ \hline
  4 & 5 & 9,089,413 & 8,779,342 & 0.966 \\ \hline
  5 & 2 & 14 & 10 & 0.714 \\ \hline
  5 & 3 & 1,102 & 990 & 0.898 \\ \hline
  5 & 4 & 304,082 & 291,530 & 0.959 \\ \hline
\end{tabular}
\end{center}
\end{table}

%


In Table~\ref{tab:gaps}, for $k=2$ and $n=8$, we present the number of nonisomorphic digraphs with particular numbers of synchronizing colorings.
Interestingly, there are several graphs in the distribution, and the gaps grow for smaller number of synchronizing colorings.
This picture is similar for the other values of $n$ and $k$ that we checked. The number of gaps seems to grow with $n$ and $k$.

\begin{table}
\caption{The number of nonisomorphic digraphs with the given number of synchronizing colorings for $k=2$ and $n=8$}
\label{tab:gaps}
\begin{center}
\begin{tabular}{|l||c|c|c|c|c|c|c|c|c|c|c|c|c|c|c|c|c|} \hline
\# synch. col. & 128 & 130 & \ldots & 158 & 160 & 162 & \ldots & 174 & 176 & 178 & 180 & 182 & 184 & 186 & 188 & 190  & 192 \\ \hline
\# digraphs    & 72  & 0   & \ldots & 0   & 24  & 0   & \ldots & 0   & 1   & 0   & 0   & 0   & 5   & 0   & 0   & 1    & 813 \\ \hline\hline
\# synch. col. & 194 & 196 & 198    & 200 & 202 & 204 & 206    & 208 & 210 & 212 & 214 & 216 & 218 & 220 & 222 & 224  & 226 \\ \hline
\# digraphs    & 0   & 1   & 1      & 12  & 1   & 1   & 6      & 202 & 0   & 2   & 1   & 134 & 4   & 22  & 14  & 4,022& 60  \\ \hline
\end{tabular}
\end{center}
\begin{center}
\begin{tabular}{|l||c|c|c|c|c|c|c|c|} \hline
\# synch. col. & 228   & 230  & 232   & 234    & 236    & 238    & 240      & 242 \\ \hline
\# digraphs    & 73    & 170  & 852   & 179    & 1,226  & 610    & 21,933   & 699 \\ \hline \hline
\# synch. col. & 244   & 246  & 248   & 250    & 252    & 254    & \multicolumn{2}{c|}{256} \\ \hline
\# digraphs    & 4,523 & 3,171& 44,230& 27,438 & 310,400& 825,791& \multicolumn{2}{c|}{6,754,895} \\ \hline
\end{tabular}
\end{center}
\end{table}

\subsection{Experiments on Random Digraphs}

To deal also with larger digraphs, we performed additional experiments with random digraphs.
We used the uniform model of a random digraph, that is, for every outgoing edge from a vertex $v$ we choose the destination vertex uniformly at random and independently from the other choices.

For $k=2$ we checked $1000,000$ digraphs for every $n=4,\ldots,15$, and $100,000$ for $n=16,\ldots,27$.
Since the number of possible colorings grow very fast with $k$, for $k=3$ we tested $n=4,\ldots,12$, and for $k=4$ we tested only $n=4,\ldots,8$.
We additionally checked the same numbers of random of digraphs in the class of strongly connected aperiodic digraphs, within the same range of $n$ and $k$.

The results from random tests for larger $n$ show the same patterns as observed in those from exhaustive search.
Figure~\ref{fig:totally} shows the fraction of totally synchronizing digraphs in random samples of strongly connected aperiodic digraphs.
The picture is very similar in the class of all digraphs.

\begin{figure}
\centering
\includegraphics[width=\textwidth]{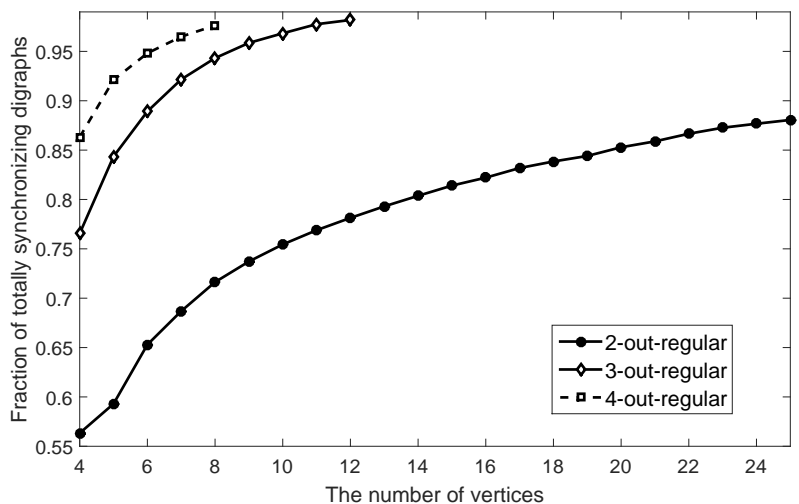}
\caption{The fraction of totally synchronizing digraphs in the class of strongly connected and aperiodic digraphs}
\label{fig:totally}
\end{figure}

\section{Digraphs with Specific Synchronizing Ratios}

In this section we present different examples of digraphs with particular values of the synchronizing ratio.
According to our computational experiments the smallest possible value of the synchronizing ratio among all digraphs is equal to $\frac{30}{64}$. This value is achieved by the digraph $\mathcal{G}_{30}$ (Figure~\ref{fig:30}).
By direct computation one can verify that only 30 colorings of $\mathcal{G}_{30}$ are synchronizing.

\begin{proposition}
There is a 2-out-regular digraph with 6 states and the synchronizing ratio $\frac{30}{64}$.
\end{proposition}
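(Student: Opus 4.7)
The plan is to verify the claim by explicit enumeration, organized so as to make the count transparent rather than a brute-force black box. Since $k=2$ and $n=6$, the total number of colorings of $\mathcal{G}_{30}$ is $(k!)^n = 2^6 = 64$. For each of these colorings one can decide synchronization using the standard pair criterion: a coloring $\mathrsfs{A}$ is synchronizing if and only if for every pair of distinct states $\{p,q\}$ there is a word $w$ with $pw=qw$, equivalently, the singletons are reachable from every pair in the pair automaton. This reduces each check to a BFS on the $\binom{6}{2}=15$ unordered pairs, which is easy to tabulate by hand or by a short computer verification.

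First I would fix the presentation of $\mathcal{G}_{30}$ by listing the 12 oriented edges read off Figure~\ref{fig:30}, and note that at each vertex the two outgoing edges can be colored $a,b$ or $b,a$, which gives the $2^6$ colorings parametrized as binary vectors in $\{0,1\}^6$. Next I would exploit two obvious symmetries to cut the case analysis: (i) swapping the names of the two colors pairs up the colorings into $32$ classes of size $2$ having the same synchronization status, and (ii) any nontrivial automorphism of $\mathcal{G}_{30}$ (read off the figure) further identifies colorings. After quotienting by these, only a handful of orbits remain, and each representative can be checked by hand using the pair-reachability test.

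The cleanest way to organize the counting of the $34$ non-synchronizing colorings is via the Ćerný–Starke criterion: a coloring fails to synchronize if and only if both induced letters preserve some nontrivial equivalence relation on $V$. I would enumerate the short list of candidate partitions of the $6$ vertices that could possibly be preserved by any coloring of $\mathcal{G}_{30}$ (this list is drastically restricted by the out-edges visible in the figure — each block must be closed under some choice of outgoing edge at every one of its vertices), then, for each such partition, count exactly how many colorings preserve it. A careful inclusion–exclusion over these partitions yields $34$ non-synchronizing colorings, and hence $64-34=30$ synchronizing ones, which is the claimed ratio $30/64$.

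The main obstacle is the inclusion–exclusion step: several of the partitions that are preserved by some coloring will be preserved simultaneously by others, so one must be careful not to double-count. In practice the partitions involved are few and of small index, so the overlaps can be listed explicitly, but this is the only place where a slip could change the final total. As a safeguard I would cross-check the count with a direct computer enumeration of the $64$ colorings using the pair-automaton synchronization test; this matches the authors' remark that the value $30$ is obtained by direct computation and makes the proof independently verifiable.
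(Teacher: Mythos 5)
Your overall strategy---exhibit the digraph $\mathcal{G}_{30}$ of Figure~\ref{fig:30} and verify the count $30/64$ by checking each of the $2^6$ colorings with the pair-reachability test---is exactly what the paper does: its proof consists of presenting $\mathcal{G}_{30}$ and stating that direct computation shows precisely $30$ of its $64$ colorings are synchronizing. Your fallback of a direct computer enumeration is therefore sufficient on its own, and the symmetry reductions (color swap, digraph automorphisms) are a harmless refinement.

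However, the device you single out as ``the cleanest way to organize the counting'' rests on a false criterion. It is not true that a coloring fails to synchronize if and only if both letters preserve a nontrivial equivalence relation on $V$. The ``only if'' direction fails because a non-synchronizing automaton need not admit any nontrivial congruence: for instance, a coloring in which both letters are permutations generating a primitive permutation group is non-synchronizing yet preserves only the two trivial relations. This is directly relevant here, since $\mathcal{G}_{30}$ admits $8$ colorings in which both letters are permutations (choose $1a\neq 2a$, $3a\neq 4a$, $5a\neq 6a$), and these need not preserve any nontrivial partition. The ``if'' direction fails as well: a synchronizing automaton can perfectly well preserve a nontrivial congruence (the reset state's class absorbs everything), so counting congruence-preserving colorings would overcount non-synchronizing ones. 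Consequently the inclusion--exclusion over preserved partitions, as described, would not produce the number $34$, and that portion of the argument cannot stand as a proof. The proposition is nevertheless established by your direct enumeration, which should be promoted from ``safeguard'' to the actual proof.
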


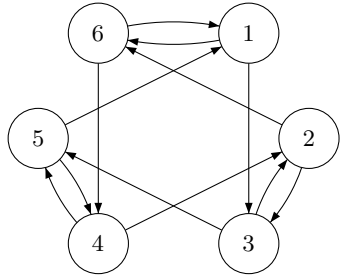
\begin{figure}[ht]
\begin{center}
\unitlength .5mm
\begin{picture}(80,76)(0,-76)
\gasset{Nw=16,Nh=16,Nmr=8} 
\node(n1)(60.0,-16.0){$1$}
\node(n2)(76.0,-44.0){$2$} 
\node(n3)(60.0,-72.0){$3$}
\node(n4)(20.0,-72.0){$4$}
\node(n5)(4.0,-44.0){$5$}
\node(n6)(20.0,-16.0){$6$}
\drawedge[ELdist=1.7](n1,n3){}
\drawedge[curvedepth=3](n1,n6){}
\drawedge[ELdist=1.7](n2,n6){}
\drawedge[curvedepth=3](n2,n3){}
\drawedge[curvedepth=3](n3,n2){}
\drawedge[ELdist=1.7](n3,n5){}
\drawedge[ELdist=1.7](n4,n2){}
\drawedge[curvedepth=3](n4,n5){}
\drawedge[curvedepth=3](n5,n4){}
\drawedge[ELdist=1.7](n5,n1){}
\drawedge[ELdist=1.7](n6,n4){}
\drawedge[curvedepth=3](n6,n1){}
\end{picture}
\end{center}
\caption{Digraph $\mathcal{G}_{30}$.}
\label{fig:30}
\end{figure}
The exceptional example $\mathcal{G}_{30}$ seems to be unique.
We did not find any other digraph with this particular value of the synchronizing ratio.
Furthermore, according to our computational experiments the smallest value of the synchronizing ratio among all other $k$-out-regular digraphs seems to be equal to $\frac{k-1}{k}$. There are many examples that reach this bound, and
in the following theorem we construct a series of digraphs with this property.

\begin{theorem}
For every $n > 3$ there is a $k$-out-regular digraph with $n$ vertices and the synchronizing ratio $\frac{k-1}{k}$.
\end{theorem}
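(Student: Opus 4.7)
The plan is to exhibit, for each $n > 3$ and $k \geq 2$, an explicit $k$-out-regular digraph $\mathcal{G} = \mathcal{G}_n^k$ on $n$ vertices whose non-synchronizing colorings form exactly a $1/k$ fraction of the $(k!)^n$ total colorings. The guiding principle is to design $\mathcal{G}$ so that non-synchronizability is governed by a single letter-symmetric ``coupling'' of edge colors at two distinguished vertices, which is the most natural way to pick out a $1/k$-fraction of colorings that is invariant under the $S_k$-action permuting the letters.

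First I would specify $\mathcal{G}$. Most vertices of $\mathcal{G}$ will carry $k$ parallel copies of their unique outgoing edge, contributing no genuine freedom to the induced automaton (each of the $k!$ colorings at such a vertex gives the same transitions). Two distinguished vertices $v_1$ and $v_2$ will instead have $k$ distinct outgoing edges, and these edges will terminate symmetrically on a pair (or $k$-tuple) of ``twin'' targets so that the twins can potentially form a stable set. The skeleton will be a Hamiltonian cycle on $n$ vertices with one short loop of length coprime to $n$ inserted to guarantee aperiodicity; strong connectivity and $k$-out-regularity are then routine counting checks. Proposition~\ref{pro:nonsc} lets me assume throughout that $\mathcal{G}$ is primitive.

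Next I would prove the structural characterization: a coloring is non-synchronizing if and only if a single coupling condition holds between the color on a distinguished edge at $v_1$ and the color on a distinguished edge at $v_2$ (for instance, that both edges receive the same letter). The forward direction proceeds by exhibiting an explicit stable $2$-subset $\{x,y\}$ --- the twins --- and checking that, when the coupling holds, the in-neighborhood of $\{x,y\}$ is balanced so that every letter sends $\{x,y\}$ to another $2$-subset of twin type, whence no word can collapse it to a singleton. For the converse I would show that a failed coupling provides the asymmetry needed to construct a reset word; the natural strategy is to iteratively apply the ``violating'' letter to push the image set off the stable twin family, then reduce cardinality one step at a time along the skeleton cycle.

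The main obstacle I expect is precisely this converse direction: proving that for every coloring in which the coupling fails, one can produce a reset word, regardless of how the remaining $(k!)^{n-2}$ freely chosen edges are colored. Once that is settled, the count is immediate: fixing the coloring at $v_1$ arbitrarily (any of $k!$ choices), the coupled letter at $v_2$ is then determined, leaving $(k-1)!$ choices at $v_2$ instead of $k!$, with all other vertices free; this gives $k! \cdot (k-1)! \cdot (k!)^{n-2} = (k!)^n/k$ non-synchronizing colorings out of $(k!)^n$, so the synchronizing ratio of $\mathcal{G}$ is exactly $\frac{k-1}{k}$, as required.
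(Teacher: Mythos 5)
Your overall strategy---two distinguished vertices whose edge colors must satisfy a single ``coupling'' condition cutting out a $1/k$-fraction of colorings, with every other vertex carrying $k$ parallel copies of one edge---is exactly the idea behind the paper's construction, and your final count is correct. But as written the argument has a genuine gap in two places. First, the digraph is never actually pinned down: ``a Hamiltonian cycle with one short loop inserted'' and edges that ``terminate symmetrically on a pair (or $k$-tuple) of twin targets'' do not determine a specific graph, so neither direction of the claimed characterization can be verified. Second, and more importantly, you explicitly leave the converse direction (coupling fails $\Rightarrow$ the coloring is synchronizing) as ``the main obstacle,'' proposing only a vague iterative scheme for escaping the stable pair and then collapsing states one at a time. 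That converse is precisely the step that requires the right choice of digraph, and without it the theorem is not proved.

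The paper's construction makes both steps immediate. Take vertices $\{0,\dots,n-1\}$, with $(0,1)$ and $(1,2)$ of multiplicity $1$, the loop $(1,1)$ and the chord $(0,2)$ of multiplicity $k-1$, and all edges $(i,i+1)$ for $i\ge 2$ (indices mod $n$) of multiplicity $k$. Write $x$ for the letter on $(0,1)$ and $y$ for the letter on $(1,2)$. If $x=y$, then every letter acts as a permutation of the vertex set (check the two cases $a=x$ and $a\neq x$), so such a coloring cannot be synchronizing---no stable-pair analysis is needed. If $x\neq y$, then $x$ lies on one of the $k-1$ loops at vertex $1$, so $x$ fixes $1$ and sends $0$ to $1$; hence the single letter $x$ funnels the entire cycle into state $1$ and $x^{n-1}$ is a reset word. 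The design lesson you are missing is that the loop at the distinguished vertex should absorb the violating letter, so that the reset word is a power of one letter rather than the outcome of an iterative escape argument. With that concrete digraph in place, your counting paragraph goes through verbatim and yields the ratio $\frac{k-1}{k}$.
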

\begin{proof}
We will define the digraph $\mathcal{G}_{n,k}$ as follows; see Figure~\ref{fig:half}.
The set of vertices $V$ is $\{0,1, \ldots n-1\}$.
The edges $(0,1)$ and $(1,2)$ are of multiplicity 1, the edges $(1,1)$ and $(0,2)$ are of multiplicity $k-1$, and the edges $(2,3),(3,4),\ldots,(i,i+1),\ldots,(n-1, 0)$ are of multiplicity $k$.

Consider now an arbitrary coloring of the digraph $\mathcal{G}_{n,k}$.
Let $x$ be the letter on the edge $(0,1)$ and $y$ be the letter on the edge $(1,2)$. If $x=y$ then every letter acts as a permutation, and so the automaton is not synchronizing.
If $x \neq y$ then $x^{n-1}$ is a reset word for the given coloring.
Hence, a coloring is synchronizing if and only if $x=y$. Therefore, the synchronizing ratio of $\mathcal{G}_{n,k}$ is equal to $\frac{k-1}{k}$.
\qed
\end{proof}

\begin{figure}[th]
\begin{center}
\unitlength .5mm
\begin{picture}(80,90)(0,-90)
\gasset{Nw=16,Nh=16,Nmr=8,loopdiam=10}
\node(n1)(40.0,-16.0){1}
\node(n)(4.0,-40.0){$0$}
\node(n2)(76.0,-40.0){2}
\node(n3)(72.0,-72.0){3}
\node(n-1)(8.0,-72.0){$n{-}1$}
\node[Nframe=n](ndots)(40,-88){$\dots$}
\drawedge(n,n1){}
\drawedge(n,n2){}
\drawedge[curvedepth=3](n2,n3){}
\drawedge[curvedepth=-3](n2,n3){}
\drawedge[ELdist=1.7](n1,n2){}
\drawedge[curvedepth=3](n-1,n){}
\drawedge[curvedepth=-3](n-1,n){}
\drawedge[curvedepth=3](n3,ndots){}
\drawedge[curvedepth=-3](n3,ndots){}
\drawedge[curvedepth=3](ndots,n-1){}
\drawedge[curvedepth=-3](ndots,n-1){}
\drawloop[ELdist=1.5,loopangle=-90](n1){}
\end{picture}
\begin{picture}(80,78)(-20,-90)
\gasset{Nw=16,Nh=16,Nmr=8,loopdiam=10}
\node(n1)(40.0,-16.0){1}
\node(n)(4.0,-40.0){$0$}
\node(n2)(76.0,-40.0){2}
\node(n3)(72.0,-72.0){3}
\node(n-1)(8.0,-72.0){$n{-}1$}
\node[Nframe=n](ndots)(40,-88){$\dots$}
\drawedge[ELdist=2.0](n,n1){}
\drawedge[ELdist=2.0, curvedepth=3](n,n2){}
\drawedge[ELdist=2.0, curvedepth=-3](n,n2){}
\drawedge[curvedepth=3](n2,n3){}
\drawedge(n2,n3){}
\drawedge[curvedepth=-3](n2,n3){}
\drawedge[ELdist=1.7](n1,n2){}
\drawedge[curvedepth=3](n-1,n){}
\drawedge(n-1,n){}
\drawedge[curvedepth=-3](n-1,n){}
\drawedge[curvedepth=3](n3,ndots){}
\drawedge(n3,ndots){}
\drawedge[curvedepth=-3](n3,ndots){}
\drawedge[curvedepth=3](ndots,n-1){}
\drawedge(ndots,n-1){}
\drawedge[curvedepth=-3](ndots,n-1){}
\drawloop[ELdist=1.5,loopangle=-90](n1){}
\drawloop[ELdist=1.5,loopangle=-90, loopdiam=7](n1){}
\end{picture}
\end{center}
\caption{The digraphs $\mathcal{G}_{n,2}$ and $\mathcal{G}_{n,3}$.}
\label{fig:half}
\end{figure}
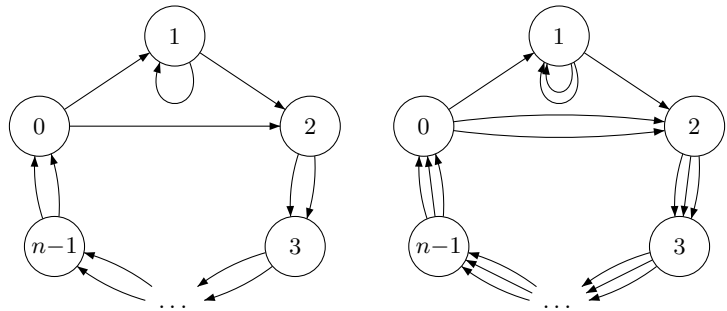

We generalize the above result to obtain digraphs with different values of the synchronizing ratio.

\begin{theorem}
For every integers $d \geq 1$ and $n \ge 3d$ there is a $k$-out-regular digraph with $n$ vertices and the synchronizing ratio $1-\frac{1}{k^d}$.
\end{theorem}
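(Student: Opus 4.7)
The natural plan is to generalize $\mathcal{G}_{n,k}$ by chaining $d$ copies of its trap gadget along a single Hamilton cycle on $V = \{0,1,\ldots,n-1\}$. For each $i = 0,1,\ldots,d-1$, wire vertices $3i, 3i+1, 3i+2$ exactly as in $\mathcal{G}_{n,k}$: a multiplicity-$1$ edge $(3i,3i+1)$ colored by some letter $x_i$, $k-1$ parallel edges $(3i,3i+2)$, a multiplicity-$1$ edge $(3i+1,3i+2)$ colored by $y_i$, and $k-1$ self-loops at $3i+1$. Remaining ``bypass'' vertices $v \geq 3d$ fire $k$ parallel edges to $v+1 \bmod n$, closing the cycle. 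Because the $2d$ critical single edges live at $2d$ distinct vertices, the letters $x_i, y_i$ are mutually independent and uniform, so the event $E := \{x_i = y_i \text{ for all } i\}$ has probability exactly $1/k^d$.

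The goal is then the dichotomy ``non-synchronizing iff $E$'', which immediately gives synchronizing ratio $1-1/k^d$. The ``$E$ implies non-synchronizing'' half mirrors the $d=1$ argument: a case analysis on whether a letter $a$ equals $x_i$ ($=y_i$) shows that under $a$ the image of $\{3i,3i+1,3i+2\}$ is exactly $\{3i+1,3i+2,3i+3\}$; combined with the shift action on the bypass vertices, every letter turns out to be a bijection of $V$, so the automaton is a group action and not synchronizing.

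The harder direction is ``if $x_j\neq y_j$ for some $j$ then the coloring synchronizes''. Here the letter $x_j$ collapses the pair $(3j,3j+1)$ onto $3j+1$ (it now labels a self-loop there), yielding at least one non-permutation letter. I plan to build a reset word by alternating $x_j$-collapses with suitable cycle-rotations that bring the shrinking state set back to the entrance of gadget $j$ for a further collapse, iterating until only a single state remains.

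The main obstacle is ruling out spurious non-synchronizing colorings outside $E$. The worry is that, under certain highly symmetric colorings, the digraph may support a non-trivial invariant partition that resists any reset word: for instance, when $n=3d$ and the critical letters happen to satisfy $x_0 = \cdots = x_{d-1}$ and $y_0 = \cdots = y_{d-1}$, the partition of $V$ by residue modulo $3$ is preserved by every letter. For $n>3d$ the bypass vertex at position $3d$ breaks the shift-by-$r$ invariance of any such partition automatically, so the alternating-collapse reset word succeeds unhindered. For the boundary case $n=3d$, I expect an asymmetric perturbation of one gadget --- for example swapping the roles of $x_i$ and $y_i$ at one chosen $i$, or rerouting a single parallel edge --- to destroy every candidate invariant partition while leaving both the independence structure and the $1/k^d$ probability of $E$ intact. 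Certifying that this perturbation preserves the dichotomy (without weakening either direction) is the technical crux.
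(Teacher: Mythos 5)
There is a genuine gap, and it sits exactly where you flagged it. Your gadget consumes \emph{three} vertices per unit, so in the boundary case $n=3d$ there are no bypass vertices at all, and the digraph is invariant under the rotation $\sigma\colon v\mapsto v+3 \pmod n$. For the coloring with $x_0=\dots=x_{d-1}=a$ and $y_0=\dots=y_{d-1}=b$, $a\neq b$, this rotation is a fixed-point-free automorphism of the resulting automaton, so no pair $\{v,\sigma(v)\}$ can ever be merged and the coloring is non-synchronizing even though it lies outside $E$. Concretely, for $d=2$, $n=6$, $k=2$ one gets $\delta(0,a)=\delta(1,a)=1$, $\delta(0,b)=\delta(1,b)=2$, $\delta(2,\cdot)=3$, $\delta(3,a)=\delta(4,a)=4$, $\delta(3,b)=\delta(4,b)=5$, $\delta(5,\cdot)=0$, and the pairs $\{0,3\},\{1,4\},\{2,5\}$ form a closed family that never collapses. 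Hence the synchronizing ratio of your digraph at $n=3d$ is strictly below $1-1/k^d$, and the ``perturbation'' you invoke to repair this is left entirely unspecified --- but certifying such a perturbation is precisely the content of the theorem, not a detail. The forward direction for $n>3d$ is also only a plan (``alternating collapses with cycle-rotations''), with no argument that every pair can be routed into the collapsing gadget.

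The paper avoids the problem by making the gadget \emph{two} vertices wide: vertex $2i$ carries $k-1$ edges to $2i+1$ and a single skip edge $(2i,2i+2)$ colored $x_i$, while vertex $2i+1$ carries a single loop colored $y_i$ and $k-1$ edges to $2i+2$; the remaining $n-2d\ge d$ vertices form a full-multiplicity deterministic tail. This tail does double duty: it destroys any rotational symmetry, and it drives the synchronization argument. Given a pair, one first funnels both states into the tail $D=\{v\ge 2d\}\cup\{0\}$ (the bound $n\ge 3d$ is exactly what guarantees the state parked at $2d$ does not re-enter the gadget zone while the other one catches up), and then pumps the single letter $y_\ell$, where $\ell$ is the \emph{smallest} index with $x_\ell\neq y_\ell$. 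Minimality of $\ell$ ensures that a trajectory entering gadget $i<\ell$ at its entrance $2i$ cannot get trapped on the loop at $2i+1$, so both states are absorbed at $2\ell+1$. If you want to salvage your three-vertex variant, you would need both a concrete symmetry-breaking modification at $n=3d$ and an analogue of this ``enter from the entrance, pump one letter'' argument; switching to the paper's two-vertex gadget gives you both for free.
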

\begin{proof}
We will define a digraph $\mathcal{H}^d_{n,k}$ as follows; see Figure~\ref{fig:synch_frac}. The set of vertices $V$ is $\{0,1, \ldots n-1\}$. There are edges $(i,i+1)$ of multiplicity $k$ for every $2d \leq i \leq n-1$. There are edges $(i,i+1)$ of multiplicity $k-1$ for every $0 \leq i < 2d$.
For every $0 \leq i < d$ the vertex $2i+1$ has a loop.
The remaining edges of multiplicity 1 are of the form $(2i, 2i+2)$ for every $0 \leq i < d$.

Consider now an arbitrary coloring $\mathrsfs{H}^d_{n,k}$ of the digraph $\mathcal{H}^d_{n,k}$.
Let $x_i$ be the letter of the edge $(2i, 2i+2)$ and $y_i$ be the letter of the loop $2i+1$, for $0 \leq i < d$.
We will show that the automaton $\mathrsfs{H}^d_{n,k}$ is synchronizing if and only if $x_i = y_i$ for every $i$.
It will immediately imply that the synchronizing ratio of $\mathcal{H}^d_{n,k}$ is equal to $1-\frac{1}{k^d}$.

If $x_i=y_i$ for every $i$ then every letter acts as a permutation; thus the automaton is not synchronizing.
Assume now that there is $\ell$ such $x_\ell \neq y_\ell$, and let $\ell$ be the smallest integer with this property.
In order to show that $\mathrsfs{H}^d_{n,k}$ is synchronizing it remains to prove that any pair can be synchronized. Let $p$ and $q$ be a pair of states. Since the automaton is strongly connected, there is a word $u$ mapping $p$ to $2d$. Let $q' = \delta(q,u)$.
Let $v$ be a shortest word mapping $q'$ to a state in $D=\{i \in V \mid i\geq 2d\} \cup \{0\}$. Note that $|v| \le d$.
Now we have $\delta(q,uv) \in D$, and also $\delta(p,uv) = \delta(2d,v) \in D$, because $n \ge 3d$.
By the fact that $\ell$ is the smallest integer with the property $x_\ell \neq y_\ell$ we obtain $\delta(s',y^n_\ell) = \delta(t',y^n_\ell)$, which concludes the proof.
\qed
\end{proof}

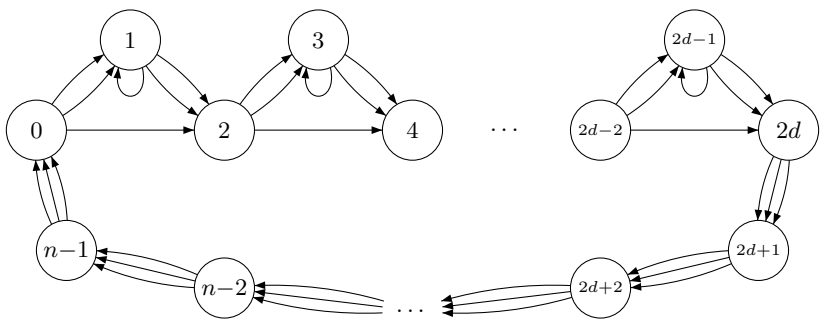
\begin{figure}[th]
\begin{center}
\unitlength .5mm
\begin{picture}(200,90)(0,-90)
\gasset{Nw=16,Nh=16,Nmr=8,loopdiam=7}
\node(n1)(25,-16.0){1}
\node(n)(0,-40.0){$0$}
\node(n2)(50.0,-40.0){2}
\node(n31)(50.0,-82.0){$n{-}2$}
\node(n3)(75,-16){3}
\node(n4)(100,-40.0){4}
\node(n6)(150,-40.0){$\scriptstyle 2d-2$}
\node(n7)(175,-16.0){$\scriptstyle 2d-1$}
\node(n8)(200,-40.0){$2d$}
\node(n9)(192,-72){$\scriptstyle 2d+1$}
\node(n10)(150,-82.0){$\scriptstyle 2d+2$}
\node(n-1)(8.0,-72.0){$n{-}1$}
\node[Nframe=n](ndots)(100,-88){$\dots$}
\node[Nframe=n](ndots2)(125,-40){$\dots$}
\drawedge(n,n2){}
\drawedge[curvedepth=3](n,n1){}
\drawedge[curvedepth=-3](n,n1){}
\drawedge[curvedepth=3](n1,n2){}
\drawedge[curvedepth=-3](n1,n2){}
\drawedge[curvedepth=3](n-1,n){}
\drawedge(n-1,n){}
\drawedge[curvedepth=-3](n-1,n){}
\drawedge[curvedepth=3](ndots,n31){}
\drawedge(ndots,n31){}
\drawedge[curvedepth=-3](ndots,n31){}
\drawedge[curvedepth=3](n31,n-1){}
\drawedge(n31,n-1){}
\drawedge[curvedepth=-3](n31,n-1){}
\drawloop[ELdist=1.5,loopangle=-90](n1){}
\drawloop[ELdist=1.5,loopangle=-90](n3){}
\drawloop[ELdist=1.5,loopangle=-90](n7){}
\drawedge[curvedepth=3](n2,n3){}
\drawedge[curvedepth=-3](n2,n3){}
\drawedge[curvedepth=3](n3,n4){}
\drawedge[curvedepth=-3](n3,n4){}
\drawedge[curvedepth=3](n6,n7){}
\drawedge[curvedepth=-3](n6,n7){}
\drawedge[curvedepth=3](n7,n8){}
\drawedge[curvedepth=-3](n7,n8){}
\drawedge[curvedepth=3](n8,n9){}
\drawedge(n8,n9){}
\drawedge[curvedepth=-3](n8,n9){}
\drawedge[curvedepth=3](n9,n10){}
\drawedge(n9,n10){}
\drawedge[curvedepth=-3](n9,n10){}
\drawedge[curvedepth=3](n10,ndots){}
\drawedge(n10,ndots){}
\drawedge[curvedepth=-3](n10,ndots){}
\drawedge(n2,n4){}
\drawedge(n6,n8){}
\end{picture}
\end{center}
\caption{The digraph $\mathcal{H}^d_{n,3}$.}
\label{fig:synch_frac}
\end{figure}

\begin{remark}
There exist many other digraphs with synchronizing ratio $1-\frac{1}{k^d}$.
Note that we can replace the $k$-path $(2d,\ldots,n-1,0)$ of $\mathcal{H}^d_{n,k}$ with any acyclic multigraph such that:
any vertex is reachable from the vertex $2d$; from any vertex we can reach the vertex $0$; and every path from $2d$ to $0$ is of the same length $\ge 3d$.
\end{remark}

\section{Conclusions and open problems}

In this section we summarize all the conjectures and open problems.
All of the conjectures are supported by experimental evidence.

\begin{conjecture}
The minimum value of the synchronizing ratio among all $k$-out-regular digraphs with $n$ vertices is equal to $\frac{k-1}{k}$, except for the case $k=2$ and $n=6$ when it is equal to $\frac{30}{64}$.
\end{conjecture}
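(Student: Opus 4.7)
By Proposition~\ref{pro:nonsc} it suffices to prove the bound for primitive strongly connected digraphs $\mathcal{G}$ on $n$ vertices. The plan is to show that the number of non-synchronizing colorings of such a $\mathcal{G}$ is at most $(k!)^n/k$, with equality forcing $\mathcal{G}$ to have essentially the shape of the construction $\mathcal{G}_{n,k}$ from the preceding theorem; the exceptional digraph $\mathcal{G}_{30}$ is then handled separately, as it exists only for the isolated case $n=6$, $k=2$.

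The core step is a structural counting argument. Recall that for a strongly connected automaton, being non-synchronizing is equivalent to admitting a non-trivial equivalence relation $\pi$ on $V$ stable under every letter. I would first show that, for any fixed candidate $\pi$ on $V$, the colorings that make $\pi$ stable form a set of size at most $(k!)^n/k$; this should follow from a per-vertex local analysis, since primitivity forces at least one vertex to have outgoing edges entering more than one $\pi$-class, and the fraction of colorings respecting the stability constraint at such a vertex is bounded by $1/k$. Then I would take a union over all stable $\pi$; by primitivity there cannot be too many candidate partitions, and ideally a dominating contribution comes from a single $\pi$ of smallest index.

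The main obstacle is the extremal analysis. Showing that equality with $(k!)^n/k$ forces $\mathcal{G} \cong \mathcal{G}_{n,k}$ requires ruling out all other digraphs whose several stable partitions could conspire to reach the bound; this is exactly the phenomenon visible in $\mathcal{G}_{30}$, where interacting constraints push the ratio below $\frac{k-1}{k}$. I expect this step to be delicate and to require a careful case analysis, both for the tight bound and for the uniqueness of the $(n,k)=(6,2)$ exception. A promising technical tool is to examine the full lattice of stable partitions of $\mathcal{G}$ and to show, via an inclusion--exclusion, that having more than one non-trivial stable partition at the extremal level imposes strong structural constraints incompatible with primitivity when $n \neq 6$ or $k \neq 2$. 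Verifying uniqueness at the exceptional point can then be confirmed by exhaustive enumeration extending Table~\ref{tab:exhaustive_results_2}, together with a finite case analysis on the admissible sizes of minimal stable partitions.
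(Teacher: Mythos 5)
Before anything else: the statement you are addressing is stated in the paper as Conjecture~1. The authors do not prove it --- they only verify it for the small values of $n$ and $k$ covered by their exhaustive search (Table~\ref{tab:exhaustive_results_1}) and exhibit constructions attaining the conjectured minimum. A correct argument here would therefore be a genuinely new result, and your proposal, which explicitly leaves the extremal analysis and the uniqueness of the exception as open obstacles, does not close the problem.

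Beyond being incomplete, the core step rests on a false premise. You claim that a strongly connected non-synchronizing automaton must admit a non-trivial equivalence relation stable under every letter (a congruence), and you propose to bound the number of non-synchronizing colorings by counting, for each candidate partition $\pi$, the colorings that stabilize it. But the extremal non-synchronizing colorings in this very problem are permutation colorings: in $\mathcal{G}_{n,k}$ with $x=y$ every letter acts as a permutation, the transition monoid is a transitive permutation group on $n$ points, and for $n$ prime such a group admits no non-trivial block system, i.e.\ no non-trivial stable partition. These colorings are non-synchronizing yet escape your count entirely; the correct characterization is pairwise (some pair of states cannot be merged), which is strictly weaker than the existence of a non-trivial congruence. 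Even granting the congruence framework, the per-vertex ``fraction at most $1/k$'' bound is not a local computation --- stability couples the color choices at all vertices of a $\pi$-class simultaneously --- and the union over all candidate $\pi$ yields an upper bound that is a sum of such terms and can exceed $(k!)^n/k$; controlling that overlap is precisely what would be needed to explain $\mathcal{G}_{30}$ and to prove that $(n,k)=(6,2)$ is the unique exception. As it stands, the proposal establishes neither the bound nor the case of equality, and the statement remains a conjecture.
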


The conjecture was verified for small values of $n$ and $k$ (see Table~\ref{tab:exhaustive_results_1}).
It implies that a uniformly random coloring of a primitive strongly connected digraph is synchronizing with probability at least $1/2$, and hence it would justify the algorithm finding synchronizing coloring randomly.

To state the next conjecture, let say that a $\emph{gap}$ in the distribution of the number of synchronizing colorings is a maximal interval of integers divisible by $k!$, such that there are no digraphs with the number of synchronizing colorings in this interval.
Thus the conjecture above states that for every $k$ and $n \neq 6$ large enough there is the gap $[k!,\frac{k-1}{k}(k!)^n-k!]$.

\begin{conjecture}
For every $k$ and $g \ge 1$, there is an $n$ large enough such that there are at least $g$ gaps in the distribution of the number of synchronizing colorings of $k$-out-regular digraphs with $n$ vertices.
\end{conjecture}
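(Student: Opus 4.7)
The plan is to exhibit, for each fixed $g$ and sufficiently large $n$, at least $g+1$ distinct integer values realized as the number of synchronizing colorings of some $k$-out-regular digraph on $n$ vertices, and then to argue that the intervals between consecutive realized values necessarily contain multiples of $k!$ that no digraph realizes. The natural starting point is the construction $\mathcal{H}^d_{n,k}$ from the preceding theorem, which realizes the synchronizing ratio $1-1/k^d$; iterating for $d=1,2,\ldots,g$ (feasible once $n\ge 3g$) and appending the totally synchronizing ratio $1$ gives $g+1$ well-separated target ratios, namely $(k!)^n$ and $(1-1/k^d)(k!)^n$ for $d=1,\ldots,g$.

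First I would turn the separation of these $g+1$ rational ratios into a separation of actual integer counts. For fixed $g$ and $k$ and large $n$, consecutive values among $\{(1-1/k^d)(k!)^n : d=1,\ldots,g\}\cup\{(k!)^n\}$ differ by at least roughly $(k!)^n/k^{g+1}$, so between each consecutive pair there are astronomically many multiples of $k!$ which are candidates for gap elements. It then suffices to show that in each of these intermediate ranges at least one multiple of $k!$ is \emph{not} realized by any digraph; the resulting non-realized multiples aggregate into the $g$ gaps demanded by the statement.

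The substantive work is to establish non-realizability. A promising route combines three ingredients: a rigidity lemma quantifying how much the synchronizing count can change under local modifications of a digraph (edge redirections or changes of multiplicity), a structural classification of the ``bottleneck'' configurations that force nontrivial synchronizing ratios strictly below $1$ (analogous to the role played by the edges $(0,1),(1,2)$ in the $(k-1)/k$ construction and by the paired vertices in $\mathcal{H}^d_{n,k}$), and a counting bound controlling how many digraphs can share the same count in a neighborhood of $(1-1/k^d)(k!)^n$. The Remark already exhibits a large family of variants of $\mathcal{H}^d_{n,k}$ that all share the exact same count, which is evidence that counts near these ratios form a sparse discrete spectrum rather than a continuous band.

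The hardest part will be this non-existence component: proving that any digraph whose synchronizing ratio lies strictly between $1-1/k^d$ and $1-1/k^{d+1}$ must in fact take one of only a discrete list of values, leaving genuine unoccupied multiples of $k!$ in between. The experimental data in Table~\ref{tab:gaps} strongly suggests that such a discrete spectrum exists and that its number of holes grows with $n$, but a rigorous argument seems to require structural tools for synchronizing ratios that are not presently available. A more tractable intermediate goal is to prove merely that the number of distinct synchronizing counts realized on $n$ vertices is $o((k!)^n/k!)$, from which arbitrarily many gaps follow by pigeonhole without having to locate them explicitly; I expect the bulk of the difficulty of the conjecture to be concentrated in establishing such an asymptotic bound.
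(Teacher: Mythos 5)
This statement is Conjecture~2 of the paper: the authors do not prove it, and offer only the experimental evidence of Table~\ref{tab:gaps} (the distribution for $k=2$, $n=8$) in its support. So there is no proof in the paper to compare against, and the real question is whether your proposal settles the open problem. It does not. The constructive half of your plan is sound and already contained in the paper: Theorem~2 realizes the counts $(1-1/k^d)(k!)^n$ for $d=1,\ldots,g$ once $n\ge 3g$, and totally synchronizing digraphs realize $(k!)^n$, so $g+1$ well-separated values are indeed attained, and if one could place at least one \emph{unrealized} multiple of $k!$ strictly between each consecutive pair, the $g$ resulting maximal unrealized intervals would be distinct gaps. But a gap is by definition a statement of non-existence over \emph{all} $k$-out-regular digraphs on $n$ vertices, and your proposal proves the non-realizability of no value whatsoever. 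The ``rigidity lemma,'' the ``structural classification of bottleneck configurations,'' and the ``sparse discrete spectrum'' are named as desiderata rather than established; you yourself concede that the required structural tools are not presently available. The Remark about many variants of $\mathcal{H}^d_{n,k}$ sharing one count shows certain values are realized often, which says nothing about values being realized never.

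The fallback via pigeonhole has the same missing core plus a quantitative slip: a global bound of $o((k!)^n/k!)$ on the number of distinct realized counts guarantees some unrealized multiple of $k!$, hence at least one gap, but not that the unrealized multiples are spread across the $g$ intervening ranges; to get $g$ gaps this way you need the count of realized values to be smaller than the number of multiples of $k!$ in the narrowest range, i.e.\ $o\bigl((k!)^{n}(k-1)/(k^{g+1}\,k!)\bigr)$, and then to argue each range is hit. That is a fine reduction for fixed $g$ and $k$, but the asymptotic bound itself is exactly the hard, unproven content. In short, the entire non-existence half of the argument --- the only part that distinguishes this conjecture from the already-proved Theorem~2 --- is absent, so the proposal is a research plan, not a proof.
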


The following conjecture can be stated either in the class of strongly connected and aperiodic digraphs, or in the class of all digraphs.

\begin{conjecture}
For every $k \ge 2$, the fraction of totally synchronizing digraphs among all $k$-out-regular digraphs with $n$ vertices tends to $1$ as $n$ goes to infinity.
\end{conjecture}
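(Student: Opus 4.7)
The plan is probabilistic: I would show that a uniformly random $k$-out-regular digraph on $n$ labeled vertices (with each outgoing edge of each vertex chosen independently and uniformly) is totally synchronizing with probability tending to~$1$. A standard transfer argument, using the fact that random digraphs have trivial automorphism group with high probability, then yields the conclusion for isomorphism classes. By Proposition~\ref{pro:nonsc} we may restrict to the sink component; for $k\ge 2$, standard random-digraph results show this sink is strongly connected and aperiodic with high probability.

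The first step is a structural characterization of failure. A coloring $\mathrsfs{A}$ is non-synchronizing iff its pair automaton admits a non-empty set $P\subseteq(V\times V)\setminus\Delta$ closed under every letter action. Building on this, the existence of \emph{some} non-synchronizing coloring of $\mathcal{G}$ forces a combinatorial condition on $\mathcal{G}$: there is a non-trivial partition $\pi=(C_1,\dots,C_m)$ of $V$ such that every vertex of $C_i$ has the same multiset of out-neighbor classes (so that $\pi$ is realizable as a stable partition under some coloring), together with an analogous non-synchronization condition on the quotient $\mathcal{G}/\pi$. Since $\mathcal{G}/\pi$ inherits a $k$-out-regular structure on $m<n$ vertices, this sets up an induction on $n$.

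The second step is a union bound over partitions. For a fixed partition with class sizes $s_1,\dots,s_m$, the probability (under the random model) that all vertices in each class share a common class-profile decays, via multinomial concentration, roughly as $\prod_i p_i^{s_i-1}$ with each $p_i<1$ explicitly computable; meanwhile, the number of partitions with a given class-size signature grows only combinatorially. Summing over partition signatures and combining with the inductive hypothesis applied to the quotient digraph should close the bound.

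The hardest part is the regime of partitions containing very small classes — especially singletons — where the per-partition probability fails to decay and a naive union bound diverges. Here one would have to exploit strong connectivity and aperiodicity directly: a singleton class $\{v\}$ together with a ``no-edge-to-$v$'' profile on its complement contradicts strong connectivity, and analogous dichotomies dispose of other degenerate cases. Obtaining multinomial tail bounds sharp enough to close the union bound uniformly across all partition types, and correctly threading the inductive step through the quotient, is where I expect the main technical difficulty to reside; I would not be surprised if the argument also requires using the known quantitative results on single-coloring synchronization (à la Berlinkov) as a base case for the induction.
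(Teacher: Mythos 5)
First, a point of calibration: this statement is Conjecture~3 of the paper and is \emph{not proved} there --- the authors offer only experimental evidence (Tables~\ref{tab:exhaustive_results_2} and Figure~\ref{fig:totally}) and explicitly leave even the complexity of \emph{recognizing} totally synchronizing digraphs as an open problem. So you are attempting an open problem, and your text is a research plan rather than a proof; there is no proof in the paper to compare it against.

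That said, your plan contains a concrete false step, and it is not one of the difficulties you flag. Your ``first step'' asserts that if $\mathcal{G}$ admits some non-synchronizing coloring, then $\mathcal{G}$ carries a non-trivial partition $\pi=(C_1,\dots,C_m)$ in which all vertices of a class have the same multiset of out-neighbor classes (an equitable partition), with a non-synchronizing quotient; your entire induction on $n$ rests on this. The implication fails. Take $V=\mathbb{Z}_5$ with edges $i\to i+1$ and $i\to i+2$ for every $i$: this digraph is $2$-out-regular, strongly connected and aperiodic, and the coloring $a\colon i\mapsto i+1$, $b\colon i\mapsto i+2$ makes both letters permutations, so it is not totally synchronizing; yet a short case analysis (singleton classes are excluded by the in-neighborhood structure, and a $(2,3)$-split forces a class closed under out-edges, contradicting strong connectivity) shows it has \emph{no} non-trivial equitable partition. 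The obstruction you are missing is general: whenever $\mathcal{G}$ is also $k$-in-regular, its edge set decomposes into $k$ permutations, yielding a non-synchronizing coloring with no associated congruence or partition whatsoever --- exactly the mechanism behind the paper's $\mathcal{G}_{n,k}$ in Theorem~1. So the structural characterization driving your union bound and your induction through quotients is wrong at its base case, and the Eulerian/permutation-coloring case (and possibly further cases of non-synchronizing colorings with trivial stability congruence) would have to be treated by an entirely separate probabilistic estimate. Combined with the difficulties you already acknowledge (singleton classes, closing the union bound uniformly), the proposal does not constitute a proof, nor a reduction of the conjecture to known results.
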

A recent non-trivial theorem states that a random automaton is synchronizing with high probability~\cite{Berlinkov2013OnTheProbabilityToBeSynchronizable,Nicaud2014FastSynchronizationOfRandomAutomata}. Conjecture 3 can be seen as a further development of this statement.

We conclude with the following problem related to computing the number of synchronizing colorings.

\begin{problem}
Given a $k$-out-regular digraph $\mathcal{G}$ with $n$ vertices, what is the computational complexity of checking whether $\mathcal{G}$ is totally synchronizing.
\end{problem}

\subsubsection{Acknowledgment}
The authors want to thank Mikhail Volkov for his significant contributions to the theory of synchronizing automata on the occasion of his $60^{\text{th}}$ birthday.

\bibliographystyle{plain}

\end{document}